\newtheorem{assumption}{Assumption}[section]
\newtheorem{definition}{Definition}[section]
\newtheorem{theorem}{Theorem}[section]
\newtheorem{lemma}{Lemma}
\newtheorem{remark}{Remark}
\newcommand\norm[1]{\left\lVert#1\right\rVert}
\title{\LARGE \bf{ 
		Finite Sample Analysis for a Class of Subspace Identification Methods}}
\author{Jiabao He, Ingvar Ziemann, Cristian R. Rojas and H\r{a}kan Hjalmarsson
	\thanks{Jiabao He, Cristian R. Rojas and H\r{a}kan Hjalmarsson are with the Division of Decision and Control Systems, School of Electrical Engineering and Computer Science, KTH Royal Institute of Technology, 100 44 Stockholm, Sweden.  (Emails: jiabaoh, crro, hjalmars@kth.se)}
	\thanks{Ingvar Ziemann is with the University of Pennsylvania, Philadelphia, PA 19104 USA. (Email: ingvarz@seas.upenn.edu)}
	\thanks{This work was supported by VINNOVA Competence Center AdBIOPRO, contract [2016-05181] and by the Swedish Research Council through the research environment NewLEADS (New Directions in Learning Dynamical Systems), contract [2016-06079], and contract 2019-04956.}
}
\begin{document}

	\maketitle
	\thispagestyle{empty}
	\pagestyle{empty}

	\begin{abstract}
		While subspace identification methods (SIMs) are appealing due to their simple parameterization for MIMO systems and robust numerical realizations, a comprehensive statistical analysis of SIMs remains an open problem, especially in the non-asymptotic regime. In this work, we provide a finite sample analysis for a class of SIMs, which reveals that the convergence rates for estimating Markov parameters and system matrices are  $\mathcal{O}(1/\sqrt{N})$, in line with classical asymptotic results. Based on the observation that the model format in classical SIMs becomes non-causal because of a projection step, we choose a parsimonious SIM that bypasses the projection step and strictly enforces a causal model to facilitate the analysis, where a bank of ARX models are estimated in parallel. Leveraging recent results from finite sample analysis of an individual ARX model, we obtain an overall error bound of an array of ARX models and proceed to derive error bounds for system matrices via robustness results for the singular value decomposition.
	\end{abstract}
	
	\IEEEpeerreviewmaketitle
	
	\section{Introduction} \label{sec1}
	
	Originating from the celebrated Ho-Kalman algorithm \cite{Ho1966effective}, subspace identification methods (SIMs) have proven extremely useful for estimating linear state-space models. Over the past 50 years, numerous efforts have been made to develop improved algorithms and gain a deeper understanding of the family of SIMs, exemplified by the successful narrative of closed-loop identification \cite{Qin2003closed,Jansson2003subspace,Ljung1996subspace,Chiuso2005consistency}. For a comprehensive overview of SIMs, we refer to \cite{Qin2006overview,van2013closed}. Despite their tremendous success both in theory and practice, several drawbacks have been recognized, including a lower accuracy compared to prediction error methods (PEMs) and an incomplete statistical analysis. There are some significant contributions to statistical properties of SIMs in the asymptotic regime. The consistency of open-loop and closed-loop SIMs is analyzed in \cite{Jansson1998consistency} and \cite{Chiuso2005consistency}, respectively, where the former suggests that persistence of excitation (PE) of the input signals is not sufficient for consistency, and stronger conditions are required in some cases. The asymptotic variance of SIMs is discussed in \cite{Gustafsson2002subspace,Jansson2000asymptotic,Bauer2005asymptotic,Chiuso2004asymptotic}. Furthermore, the asymptotic equivalence of some SIMs is discussed in \cite{Chiuso2007relation,Chiuso2007role,Chiuso2007some}. Regarding the optimality, although the canonical variate analysis (CVA) method is stated to be optimal when measured inputs are white \cite{Larimore1996statistical,Bauer2002some}, simulation studies indicate that it is not asymptotically efficient, as it does not reach Cram\'er-Rao lower bound (CRLB) \cite{Chiuso2007role}. In short, SIMs are generally consistent, however, the question of whether there are SIMs that are asymptotically efficient remains unresolved. Besides, although some of the methods are asymptotically equivalent, their performance differs in finite sample setups. Meanwhile, it is difficult to capture their transient behaviors using the asymptotic theory. Therefore, a complete statistical analysis and comparison among various SIMs, as well as the pursuit of an efficient SIM are still open problems.
	
	There has been a recent resurgence of interest in identifying state-space models for dynamic systems, where the focus is on the non-asymptotic regime. 
	Finite sample analysis in the field of system identification was pioneered by \cite{Campi2002finite,Weyer1999finite}, where the performance of PEMs were analyzed. Over the last few years, a series of papers have revisited this topic and introduced many promising developments on fully observed systems \cite{Sarkar2019near,Simchowitz2018learning,Jedra2022finite} and partially observed systems \cite{Tsiamis2019finite,Oymak2019non,Sarkar2021finite,Lale2021finite}. For a broader overview of these results, we refer to \cite{Tsiamis2023statistical,Ziemann2023tutorial}. As pointed out in \cite{Tsiamis2019finite}, finite sample analysis has been a standard tool for comparing algorithms in the non-asymptotic regime. It is expected that such analysis of SIMs will not only provide a detailed qualitative characterization of learning complexity and error bounds, but also bring more insights into the comparison of different SIMs regarding optimality, the selection of past and future horizons, and the design of controllers \cite{Tsiamis2023statistical}. Broadly speaking, an analysis in the non-asymptotic regime brings a wider understanding of SIMs.
	
	Despite the auspicious future, the path of finite sample analysis for SIMs proves to be challenging. Usually, multi-step statistical operations are involved in SIMs, including pre-estimation, projection, weighted singular value decomposition (SVD) and maximum likelihood (ML) estimation. While these steps enhance performance, they simultaneously complicate the model and pose challenges for statistical analysis. To the best of our knowledge, except for a finite sample analysis of the Ho-Kalman algorithm \cite{Oymak2019non}, a stochastic SIM in the absence of input signals \cite{Tsiamis2019finite} and an individual ARX model \cite{Lale2021finite,Ziemann2023tutorial}, a finite sample analysis of classical SIMs in the presence of inputs is still unavailable. Compared with SIMs without taking account of measured inputs, the inclusion of inputs results in a higher complexity due to the presence of an unknown block matrix with a lower-triangular Toeplitz structure. This matrix is responsible for recording the impact of 'future' input on 'future' output. To manage this complexity, classical SIMs choose to remove this matrix via a projection step. Although this method is computationally efficient, it makes the model format non-causal anymore and poses challenges for statistical analysis. Therefore, we propose to use a parallel and parsimonious SIM, namely PARSIM \cite{Qin2005novel}, which bypasses the projection step and strictly enforces a causal model to facilitate the analysis. Except the projection step, PARSIM aligns with the unified framework of the family of SIMs in all other aspects, opting for PARSIM will not constrain our comprehension of the full panorama of SIMs.
	
	The main contributions of this paper are:
	\begin{enumerate}
		\item We leverage recent results from finite sample analysis of an individual ARX model to obtain an error bound for the bank of ARX models featured in PARSIM. Our analysis reveals that the convergence rates for estimating Markov parameters and system matrices are  $\mathcal{O}(1/\sqrt{N})$ even in the presence of inputs, which is in line with classical asymptotics.
		\item Our method can be extended to the class of SIMs that estimate an array of ARX models, for instance, the included PARSIM and SIMs based on predictor identification (PBSID). Therefore, it paves the way for comprehensively understanding the broader landscape of SIMs.
	\end{enumerate}
	
	The disposition of the paper is as follows: A short review of SIMs with the focus on PARSIM is given in Section \ref{sec2}, and the problem as well as the roadmap ahead to analyze its finite sample behavior are described at the end of Section \ref{sec2}. The finite sample analysis of an individual ARX model is summarized in Section \ref{sec3}. An overall error bound for a bank of ARX models and the resulting error bounds of the system matrices are given in Section \ref{sec4}. Finally, some discussions and future work are provided in Section \ref{sec5}.
	
	\textit{Notations:} For a matrix $X$ with appropriate dimensions, $X^\top$, $X^{-1}$, $X^{1/2}$, $X^\dagger$, $\lVert X \rVert$, ${\rm{det}}(X)$, $\rho(X)$, $\lambda_{\rm{max}}(X)$, $\lambda_{\rm{min}}(X)$ and $\sigma_{n}(X)$ denote its transpose, inverse, square root, Moore$\mbox{-}$Penrose pseudo-inverse, spectral norm, determinant, spectral radius, maximum eigenvalue, minimum eigenvalue and $n-$th largest singular value, respectively. $X \succ(\succcurlyeq)$ $0$ means that $X$ is positive (semi) definite. The matrices $I$ and $0$ are the identity and zero matrices with compatible dimensions. The multivariate normal distribution with mean $\mu$ and covariance $\Sigma$ is denoted as $\mathcal{N}(\mu,\Sigma)$. The notation ${\mathbb{E}{x}}$ is the expectation of a random vector $x$, and $\mathbb{P}(\mathcal{E})$ is the probability of the event $\mathcal{E}$. $\mathcal{E}^c$ is the complementary event of $\mathcal{E}$, and $\mathcal{E}_{1}\cup\mathcal{E}_{2}$ and $\mathcal{E}_{1}\cap\mathcal{E}_{2}$ are the union and intersection of events $\mathcal{E}_{1}$ and $\mathcal{E}_{2}$, respectively.
	
	\section{Preliminaries} \label{sec2}
	
	\subsection{Model and Assumptions} \label{Sct2.1}
	
	Consider the following discrete-time linear time-invariant (LTI) system on innovations form:
	\begin{subequations} \label{E1}
		\begin{align}
			x_{k + 1} &= Ax_{k}  + Bu_{k} + Ke_{k}, \label{E1a}\\
			y_{k} &= Cx_{k} +e_{k}, \label{E1b}		
		\end{align}
	\end{subequations}
	where $x_{t}\in \mathbb{R}^{n_x}$, $u_{t}\in \mathbb{R}^{n_u}$, $y_{t}\in \mathbb{R}^{n_y}$ and $e_{t}\in \mathbb{R}^{n_y}$ are the state, input, output and innovations, respectively. For brevity of notation, we assume that the initial time starts at $k=1$, and the initial state is $x_1=0$. It has been widely recognized that under mild conditions, the above innovations model describes the same input-output trajectories with identical statistics as a standard state-space model \cite{Ziemann2023tutorial,Qin2006overview}. Thus, without loss of generality, we study the innovations model. We make the following assumptions which are commonly used in SIMs:
	\begin{assumption} \label{Asp1}
		\begin{enumerate} 
			\item The spectral radius of $A$ and $A-KC$ satisfies $\rho({A}) \leq 1$ and $\rho({A-KC}) < 1$.
			\item The system is minimal, i.e., $(A,[B,K])$ is controllable and $(A,C)$ is observable, and the system order $n_x$ is known to the user.
			\item The innovations $\{e_k\}$ consists of independent and identically distributed (i.i.d.) Gaussian random variables, i.e., $e_{k} \sim \mathcal{N}(0,\sigma_e^2I)$.\footnote{We believe that our results can be extended to more general setups, for instance, sub-Gaussians.}
			\item The input sequence $\{u_k\}$ consists also of i.i.d. Gaussian random variables, i.e., $u_{k} \sim \mathcal{N}(0,\sigma_u^2I)$. Moreover, it is assumed independent of $\{e_k\}$.
		\end{enumerate}
	\end{assumption}
	
	\subsection{A Recap of PARSIM} \label{Sct2.2}
	
	Here we provide a short review of SIMs, with a focus on PARSIM. An extended state-space model for \eqref{E1} can be derived as \cite{Qin2005novel}
	\begin{subequations} \label{E2}
		\begin{align}
			Y_f &= \Gamma_fX_k + G_fU_f + H_fE_f, \label{E2a}\\
			Y_p &= \Gamma_pX_{k-p} + G_pU_p + H_pE_p, \label{E2b}		
		\end{align}
	\end{subequations}
	where $f$ and $p$ denote future and past horizons chosen by the user, respectively. The extended observability matrix is
	\begin{equation} \label{E3}
		\Gamma_f = \begin{bmatrix}
			{{C^{\top}}}&{{{\left( {CA} \right)}^{\top}}}& \cdots &{{{\left({C{A^{f - 1}}} \right)}^{\top}}}
		\end{bmatrix}^{\top},
	\end{equation}
	and $G_f$ with $H_f$ are lower-triangular Toeplitz matrices of Markov parameters with respect to the input and innovations, 
	\vspace{-3mm}
	\begin{subequations} \label{E4}
		\begin{align}
			G_{f} &= \begin{bmatrix}
				{0}&{0}& \cdots &0\\
				{CB}&0& \cdots &0\\
				\vdots & \vdots & \ddots & \vdots \\
				{C{A^{f-2}}B}&{C{A^{f-3}}B}& \cdots &0
			\end{bmatrix}, \label{E4a}\\
			H_{f} &= \begin{bmatrix}
				{I}&{0}& \cdots &0\\
				{CK}&I& \cdots &0\\
				\vdots & \vdots & \ddots & \vdots \\
				{C{A^{f-2}}K}&{C{A^{f-3}}K}& \cdots &I
			\end{bmatrix}. \label{E4b}
		\end{align}
	\end{subequations}
	Past and future inputs are collected in the Hankel matrices
	\begin{subequations} \label{E5}
		\begin{align}
			U_{p} &= \begin{bmatrix}
				{{u_{k-p}}}&{{u_{k-p+1}}}& \cdots &{{u_{k-p+N-1}}}\\
				{{u_{k-p+1}}}&{{u_{k-p+2}}}& \cdots &{{u_{k-p+N}}}\\
				\vdots & \vdots & \ddots & \vdots \\
				{{u_{k-1}}}&{{u_{k}}}& \cdots &{{u_{k+N-2}}} \end{bmatrix}, 	\label{E5a} \\
			U_{f} &= \begin{bmatrix}
				{{u_k}}&{{u_{k+1}}}& \cdots &{{u_{k+N-1}}}\\
				{{u_{k+1}}}&{{u_{k+2}}}& \cdots &{{u_{k+N}}}\\
				\vdots & \vdots & \ddots & \vdots \\
				{{u_{k+f-1}}}&{{u_{k+f}}}& \cdots &{{u_{k+f+N-2}}} \end{bmatrix}.  \label{E5c}
		\end{align}
	\end{subequations}
	Similar definitions are given for matrices $\Gamma_p$, $G_p$, $H_p$, $Y_p$, $Y_f$, $E_p$ and $E_f$ \cite{Qin2005novel}. Usually, we take $k=p+1$, and then the total number of samples is denoted as $\bar N = p+f+N-1$. The state sequences are defined as
	\begin{subequations} \label{E6}
		\begin{align}
			X_{k} &= \begin{bmatrix}
				{{x_k}}&{{x_{k+1}}}& \cdots &{{x_{k+N-1}}} \end{bmatrix},\label{E6a}\\
			X_{k-p} &= \begin{bmatrix}
				{{x_{k-p}}}&{{x_{k-p+1}}}& \cdots &{{x_{k-p+N-1}}} \end{bmatrix}.\label{E6b}
		\end{align}
	\end{subequations}
	Furthermore, by replacing $e_k$ with $y_k - Cx_k$ in \eqref{E1a} and iterating the equation, we obtain the relation
	\begin{equation} \label{E7}
		X_k = L_pZ_p + \bar A^pX_{k-p},
	\end{equation}
	where $A_c = A-KC$, $Z_p = \begin{bmatrix}
		Y_p^{\top}&U_p^{\top}
	\end{bmatrix}^{\top}$, and $L_p$ is the extended controllability matrix defined as
	\begin{equation} \label{E8}
		L_p = \begin{bmatrix}
			A_c^{p-1}K& \cdots &{A_c K}& K &{A_c^{p-1}B}&\cdots&{A_c B}&B
		\end{bmatrix}.
	\end{equation}
	After substituting \eqref{E7} into \eqref{E2a}, we have 
	\begin{equation} \label{E9}
		Y_f = \Gamma_fL_pZ_p + G_fU_f + H_fE_f + \Gamma_f\bar A^pX_{k-p}.
	\end{equation}
	Most SIMs use \eqref{E9} to first estimate either the extended observability matrix $\Gamma_f$ or system state $X_k$, and then obtain a realization of the system matrices up to a similarity transformation. To illustrate, since $G_f$ is a lower-triangular Toeplitz matrix recording the effect of $U_f$ on $Y_f$, and it is difficult to preserve such structure with least-squares, what classical SIMs do is to eliminate this term by projecting out $U_f$ as
	\begin{equation} \label{E10}
		Y_f\Pi_{U_f}^{\perp} = \Gamma_fL_pZ_p\Pi_{U_f}^{\perp} + H_fE_f\Pi_{U_f}^{\perp} + \Gamma_f\bar A^pX_{k-p}\Pi_{U_f}^{\perp},
	\end{equation}
	where $\Pi_{U_f}^{\perp} = I - U_f^\top(U_fU_f^\top)^{-1}U_f$. Then the above equation is simplified based on the following observations: When $p$ is sufficiently large, we have $\bar A^p \approx 0$. Also, as $U_f$ is uncorrelated with $E_f$, we have $E_f\Pi_{U_f}^{\perp} \approx E_f$. Furthermore, $E_f$ is uncorrelated with $Z_p$, i.e., $\frac{1}{N}E_fZ_p^\top \approx 0$. By multiplying $Z_p^\top$ on both sides of \eqref{E10} we have
	\begin{equation} \label{E11}
		Y_f\Pi_{U_f}^{\perp}Z_p^\top \approx \Gamma_fL_pZ_p\Pi_{U_f}^{\perp}Z_p^\top.
	\end{equation}
	Then the range space of the extended observability matrix $\Gamma_f$ can be estimated using
	\begin{equation} \label{E12}
		\widehat {\Gamma_fL_p} = Y_f\Pi_{U_f}^{\perp}Z_p^\top(Z_p\Pi_{U_f}^{\perp}Z_p^\top)^{-1}.
	\end{equation}
	To recover the extended observability matrix $\Gamma_f$ or the state $X_k$, weighted SVD is often used, i.e., 
	\begin{equation} \label{E13}
		W_1\widehat {\Gamma_fL_p}W_2 = \hat U\hat \Lambda \hat V^\top \approx \hat U_{1}\hat \Lambda_{1}\hat V_{1}^\top, 
	\end{equation}
	where $\hat \Lambda_{1}$ contains the $n_x$ largest singular values. In this way, a balanced realization of $\hat \Gamma_f$ or $\hat L_p$ is
	\begin{subequations} \label{E14}
		\begin{align}
			\hat \Gamma_f &= W_1^{-1}\hat U_{1}{\hat \Lambda}_{1}^{1/2},\label{E14a}\\
			\hat L_p &= {\hat \Lambda}_{1}^{1/2}\hat V_{1}^\top W_2^{-1}.\label{E14b}
		\end{align}
	\end{subequations}
	Different choices of weighting matrices $W_1$ and $W_2$ lead to distinct classical SIMs \cite{Van1995unifying,Qin2006overview}. As pointed out in \cite{Qin2005novel}, one of the main issues of those classical SIMs is that the model is not causal anymore due to the absence of $G_f$ in \eqref{E10}. As a result, the estimated parameters have inflated variance due to the existence of unnecessary and extra terms. Furthermore, this poses a challenge in analyzing statistical properties, which we will see later in detail. 
	
	To enforce causal models, a parallel and parsimonious SIM,  PARSIM, is proposed in \cite{Qin2005novel}. Instead of doing the projection step in \eqref{E10} once, PARSIM zooms into each row of \eqref{E9} and equivalently performs $f$ ordinary least-squares (OLS) to estimate a bank of ARX models. To illustrate this, the extended state-space model \eqref{E9} can be partitioned row-wise as 
	\begin{equation} \label{E15}
		Y_{fi} = \Gamma_{fi}L_pZ_p + G_{fi}U_i + H_{fi}E_i + \Gamma_{fi}\bar A^pX_{k-p},
	\end{equation}
	where $ i = 1,2,...f$, and
	\begin{equation} \label{E15a}
		\nonumber
		\begin{split}
			\Gamma_{fi} &= CA^{i-1}\in \mathbb{R}^{n_y\times n_x},\\
			Y_{fi} &= \begin{bmatrix}
				{{y_{k+i-1}}}&{y_{k+i}}& \cdots &{y_{k+N+i-2}} \end{bmatrix}\in \mathbb{R}^{n_y\times N},\\
			U_{fi} &= \begin{bmatrix}
				{{u_{k+i-1}}}&{u_{k+i}}& \cdots &{u_{k+N+i-2}} \end{bmatrix}\in \mathbb{R}^{n_u\times N},\\
			U_i &= \begin{bmatrix}
				{{U_{f1}^{\top}}}&{{U_{f2}^{\top}}}& \cdots &{{U_{fi}^{\top}}}
			\end{bmatrix}^{\top}\in \mathbb{R}^{in_u\times N}, \\
			E_{fi} &= \begin{bmatrix}
				{{e_{k+i-1}}}&{e_{k+i}}& \cdots &{e_{k+N+i-2}} \end{bmatrix}\in \mathbb{R}^{n_y\times N},\\
			E_i &= \begin{bmatrix}
				{{E_{f1}^{\top}}}&{{E_{f2}^{\top}}}& \cdots &{{E_{fi}^{\top}}}
			\end{bmatrix}^{\top}\in \mathbb{R}^{in_y\times N}, \\
			G_{fi} &= \begin{bmatrix}
				CA^{i-2}B&  \cdots &CB&0
			\end{bmatrix}  \\
			&\overset{\Delta}{=}\begin{bmatrix}
				G_{i-1}&  \cdots &G_{1}&G_{0}
			\end{bmatrix}\in \mathbb{R}^{n_y\times in_u}, \\
			H_{fi} &= \begin{bmatrix}
				CA^{i-2}K&  \cdots &CK&I
			\end{bmatrix} \\
			&\overset{\Delta}{=} \begin{bmatrix}
				H_{i-1}&  \cdots &H_{1}&H_{0}
			\end{bmatrix}\in \mathbb{R}^{n_y\times in_y}.
		\end{split}
	\end{equation}
	Then PARSIM uses OLS to estimate each $\Gamma_{fi}L_p$ and $G_{fi}$ simultaneously from the causal model \eqref{E15}:
	\begin{equation} \label{E16}
		\hat \theta_i \triangleq \begin{bmatrix}
			\widehat {\Gamma_{fi}L_p}& \hat G_{fi}
		\end{bmatrix} = Y_{fi} \begin{bmatrix}
			Z_p\\ U_i\end{bmatrix}^\dagger.
	\end{equation}
	At last, the whole estimate of $\Gamma_{f}L_p$ is obtained by stacking the $f$ estimates together as
	\begin{equation} \label{E17}
		\widehat {\Gamma_{f}L_p} = \begin{bmatrix}
			\widehat {\Gamma_{f1}L_p}\\ \widehat {\Gamma_{f2}L_p} \\ \vdots \\\widehat {\Gamma_{ff}L_p}\end{bmatrix}.
	\end{equation}
	Comparing with classical SIMs that only estimate $\Gamma_{f}L_p$ in \eqref{E12}, PARSIM estimates $\Gamma_{f}L_p$ and Markov parameters $G_{fi}$ simultaneously. In this way, the lower-triangular Toeplitz structure of $G_f$ is preserved and the causality is strictly enforced. Furthermore, it has been shown that the above algorithm gives a smaller variance of $\widehat {\Gamma_{f}L_p}$ than classical SIMs in the asymptotic regime. For the subsequent realization step, PARSIM goes back to the weighted SVD step \eqref{E13}. 
	
	In this paper, we consider a simplified implementation of the original PARSIM \cite{Qin2005novel}, i.e., we choose the weighting matrices $W_1 = I$ and $W_2 = I$, and the system matrices are obtained in the following way:
	\begin{subequations} \label{E18}
		\begin{align}
			\hat C &= \hat \Gamma_f(1:n_y,:), \label{E18a}\\
			\hat A &= (\underline{I}\hat \Gamma_{f})^\dagger(\bar{I}\hat \Gamma_{f}),\label{E18b}\\
			\hat K &= \hat L_{p}(:,(p-1)n_y+1:pn_y), \label{E18c}\\
			\hat B &= \hat L_{p}(:,(2p-1)n_y+1:2pn_y), \label{E18d}
		\end{align}
	\end{subequations}
	where $\underline{I} = \begin{bmatrix}I&0\end{bmatrix}$, $\bar{I} = \begin{bmatrix}0&I\end{bmatrix}$ and the indexing of matrices follows MATLAB syntax.
	
	\subsection{Problem and Roadmap Ahead} \label{Sct2.3}
	
	Now we define the problem explicitly and sketch the path ahead to its solution. Under Assumption \ref{Asp1}, given a finite number $\bar N$ of input-output samples and horizons $f$ and $p$, we aim to provide error bounds with high probability for the realization \eqref{E18}. To be specific, with probability at least $1-\delta$, we wish to establish the following error bounds explicitly \cite{Tsiamis2019finite}:
	\begin{subequations} \label{E19}
		\begin{align}
			&\lVert \hat A - T^{-1}AT\rVert \leq \epsilon_A(\delta,N),
			\lVert \hat C - CT\rVert \leq \epsilon_C(\delta,N),\label{E19b}\\
			&\lVert \hat K - T^{-1}K\rVert \leq \epsilon_K(\delta,N), 
			\lVert \hat B - T^{-1}B\rVert \leq \epsilon_B(\delta,N), \label{E19d}
		\end{align}
	\end{subequations}
	where $T$ is a non-singular matrix. 
	
	\begin{remark} \label{Rmk0}
		It is only possible to obtain the system matrices up to a similarity transformation due to the non-uniqueness of the realization\cite{Oymak2019non}.
	\end{remark}
	
	We arrive at this result by a two-step procedure. In Step 1, we derive an error bound for $\hat \theta_i$ in \eqref{E16} for every ARX model. In other words, we define the event 
	\begin{equation} \label{E20}
		\mathcal{E}_{i} \triangleq \left\{\norm{\hat \theta_i - \theta_i} \leq \epsilon_{\theta_i}(\delta,N)\right\},
	\end{equation}
	and require that $\mathbb{P}(\mathcal{E}_{i}^c) \leq \delta/f$ for $i=1,2,...,f$. In Step 2, we first utilize a norm inequality between the block matrix $\widehat {\Gamma_{f}L_p}$ and its sub-matrices $\widehat {\Gamma_{fi}L_p}$ to obtain the total error bound of $\widehat {\Gamma_{f}L_p}$. This essentially requires that the intersection of $f$ events has probability $\mathbb{P}(\mathcal{E}_{1}\cap\mathcal{E}_{2}\cap \cdots \cap\mathcal{E}_{f}) \geq 1-\delta$, which is guaranteed due to Bonferroni's inequality: the probability $\mathbb{P}(\mathcal{E}_{1}^c\cup\mathcal{E}_{2}^c\cup \cdots \cup\mathcal{E}_{f}^c) \leq \sum_{i=1}^{f} \mathbb{P}(\mathcal{E}_{i}^c) \leq \delta$, in Step 1. At last, using recent results from SVD robustness \cite{Oymak2019non,Tsiamis2019finite}, error bounds in \eqref{E19} are obtained.
	
	\section{Finite Sample Analysis of each ARX Model} \label{sec3}
	
	Following our roadmap, we formalize Step 1 above in this section. 
	We emphasize that the results presented in this section apply to $i=1,2,...,f$, with the acknowledgment of their reliance on the specific value of $i$. This dependency is underscored through the use of the subscript $i$. First, we partition the following matrices column-wise:
	\begin{subequations} \label{E21}
		\begin{align}
			\begin{bmatrix}
				Z_p\\ \hdashline[2pt/2pt] U_i\end{bmatrix}&= \begin{bmatrix}
				y_{p}(1)&y_{p}(2)&\cdots&y_{p}(N) \\
				u_{p}(1)&u_{p}(2)&\cdots&u_{p}(N) \\
				\hdashline[2pt/2pt]
				u_{i}(1)&u_{i}(2)&\cdots&u_{i}(N) \end{bmatrix}, \\
			E_i&= \begin{bmatrix}
				e_{i}(1)&e_{i}(2)&\cdots&e_{i}(N)\end{bmatrix},
		\end{align}
	\end{subequations}
	where $u_{p}(k) = \begin{bmatrix}
		{{u_{k}^\top}}&{{u_{k+1}^\top}}&\cdots&{{u_{k+p-1}^\top}}\end{bmatrix}^\top$ and $u_{i}(k) = \begin{bmatrix}{{u_{k+p}^\top}}&{{u_{k+p+1}^\top}}&\cdots&{{u_{k+p+i-1}^\top}} \end{bmatrix}^\top$ are past input and future input, respectively, and similar definitions apply to $y_{p}(k)$ and $e_{i}(k)$. For brevity, we define a covariate
	\begin{equation} \label{E22}
		z_{p,i}(k) = \begin{bmatrix}
			y_p(k)\\ u_p(k)\\ u_i(k)\end{bmatrix},
	\end{equation} 
	so the error of the OLS estimate \eqref{E16} can be written as 
	\begin{equation} \label{E23}
		\begin{split}
			{\tilde \theta}_{i} \overset{\Delta}{=} & {\hat \theta}_{i} - \theta_{i}= H_{fi}E_i\begin{bmatrix}
				Z_p\\ U_i\end{bmatrix}^\dagger + \Gamma_{fi}\bar A^pX_{k-p} \begin{bmatrix}
				Z_p\\ U_i\end{bmatrix}^\dagger \\
			= &\underbrace{H_{fi}\sum_{k=1}^{N}\frac{1}{N}{e_i(k)z_{p,i}^\top(k)} \left(\sum_{k=1}^{N}\frac{1}{N}{z_{p,i}(k)z_{p,i}^\top(k)}\right)^{-1}}_{{\text{stochastic error}} \ {\tilde \theta}_{i}^S} + \\
			&\underbrace{\Gamma_{fi}\bar A^p\sum_{k=1}^{N}\frac{1}{N}{x_{k}z_{p,i}^\top(k)} \left(\sum_{k=1}^{N}\frac{1}{N}{z_{p,i}(k)z_{p,i}^\top(k)}\right)^{-1}}_{{\text{truncation bias} \ {\tilde \theta}_{i}^B}}.
		\end{split}
	\end{equation}
	There are two types of errors, namely, the stochastic error ${\tilde \theta}_{i}^S$ and truncation bias ${\tilde \theta}_{i}^B$. The key observation is that the future innovations $e_i(k)$ are independent of the covariate $z_{p,i}(l)$ for all $l<k$, due to the fact that $z_{p,i}(l)$ consists of the past output, past input and future input. This provides a martingale structure, which is convenient to analyze. By contrast, if we revisit the classical SIMs, the stochastic error for the estimate \eqref{E12} will be $E_f\Pi_{U_f}^{\perp}Z_p^\top(Z_p\Pi_{U_f}^{\perp}Z_p^\top)^{-1}$. Due to the projection matrix $\Pi_{U_f}^{\perp}$, the columns of $E_f$ and $Z_p$ are mixed together, making the above term non-causal, resulting in the loss of the martingale structure. We believe that this is one of the main barriers preventing a finite sample analysis for classical SIMs, which is also the reason why we choose PARSIM that bypasses the projection step. Before proceeding further, we have the following definitions regarding the covariance and empirical covariance of the covariant $z_{p,i}(k)$:
	\begin{subequations} \label{E24}
		\begin{align}
			{\Sigma}_{p,i}(k) &\triangleq \mathbb{E}z_{p,i}(k)z_{p,i}^\top(k),\\
			{\hat \Sigma}_{p,i}(N) &\triangleq \frac{1}{N}\sum_{k=1}^{N}z_{p,i}(k)z_{p,i}^\top(k).
		\end{align}
	\end{subequations}
	For simplicity, with a slight abuse of notation, we use ${\Sigma}_{i,k}={\Sigma}_{p,i}(k)$ and ${\hat \Sigma}_{i,N}={\hat \Sigma}_{p,i}(N)$, where the dependency of covariance on the past horizon $p$ is concealed. 
	Also, the covariance of the state $x_k$ is defined similarly as
	\begin{equation} \label{E25}
		{\Sigma}_{x,k} \triangleq \mathbb{E}x_{k}x_{k}^\top.
	\end{equation}
	In this way, the stochastic error ${\tilde \theta}_{i}^S$ can be rewritten as
	\begin{equation} \label{E26}
		{\tilde \theta}_{i}^S= \left(H_{fi}\sum_{k=1}^{N}\frac{1}{N}{e_i(k)z_{p,i}^\top(k)}{\hat \Sigma}_{i,N}^{-1/2}\right){\hat \Sigma}_{i,N}^{-1/2}.
	\end{equation}
	To bound the above term, we use a self-normalized martingale to deal with the leftmost term in the bracket. As for ${\hat \Sigma}_{i,N}^{-1/2}$, we use recent results from the smallest eigenvalue of the empirical covariance of causal Gaussian processes to bound it \cite{Ziemann2023note,Ziemann2023tutorial}, which establish the condition of PE. 
	
	\subsection{Persistence of Excitation} \label{Sct3.1}
	
	First, we make the following assumption regarding the selection of the past horizon $p$.
	\begin{assumption} \label{Asp2}
		The past horizon is chosen as $p=\beta\text{log}N$, where $\beta$ is large enough such that
		\begin{equation} \label{E27}
			\norm{CA_c^p}\norm{{\Sigma}_{x,N}} \leq N^{-3}. 
		\end{equation}
	\end{assumption}
	\begin{remark} \label{Rmk1}
		The above assumption ensures that the truncation bias $\Gamma_{fi}A_c^px_{k}$ is small enough, allowing the model \eqref{E15} to closely approximate an ARX model. To achieve this goal, the exponentially decaying term $A_c^p$ should counteract the magnitude of the state $x_{k}$. Since $A$ is marginally stable, $x_{k}$ scales at most polynomially with $k$ \cite{Ziemann2023tutorial}. Hence, the state norm $\norm{{\Sigma}_{x,N}}$ grows at most polynomially with $N$. Since $\rho(A_c)<1$, we have $\norm{A_c^p} = \mathcal{O}(\rho^p)$ for some $\rho > \rho(A_c)$. Taking $p=\beta\text{log}N$, we have $\norm{A_c^p} = \mathcal{O}(N^{-\beta/{\rm{log}(1/\rho)}})$. In this way, the condition \eqref{E27} will be satisfied for a large $\beta$.
	\end{remark}
	
	PE is equivalent to requiring that the empirical covariance ${\hat \Sigma}_{i,N}$ is positive definite. For this purpose, the number of samples $N$ should exceed a certain threshold, namely, burn-in time $N_{pe}$. 
	\begin{definition}\label{Def1}
		The burn-in time $N_{pe}$ is defined as
		\begin{equation} \label{E28}
			N_{pe}(\delta,\beta,i) \triangleq \text{min}\left\{N: N \geq N_0(N,\delta,\beta,i)\right\},
		\end{equation}
		where 
		\begin{equation} \label{E28a}
			\nonumber
			\begin{split}
				&N_0(N,\delta,\beta,i) 
				\triangleq c_0\tau_i\text{max}\left\{\sigma_e^2,1\right\}(\text{log}\frac{1}{\delta} + d_i\text{log}C_\text{sys}(N,\tau_i)),\\
				&C_\text{sys}(N,\tau_i) \triangleq \frac{N}{3\tau_i}\frac{{\norm{\Sigma_{i,N}}^2}}{{\lambda_{\text{min}}^2}(\Sigma_{i,\tau_i})},
				\tau_i = i+p = i +\beta\text{log}N, \\
			\end{split}
		\end{equation}
		$d_i = pn_y + \tau_i n_u$ and $c_0$ is a universal positive constant.
	\end{definition}
	\begin{remark} \label{Rmk2}
		To show that the above definition is not vacuous, we need to demonstrate that the condition $N\geq N_{0}(N,\delta,\beta,i)$ is feasible. For any given $\beta$,  $\tau_i$ increases logarithmically with $N$. Also, $\norm{\Sigma_{i,N}}^2$ grows polynomially with $N$, hence, the system theoretic term $\text{log}C_\text{sys}(N,\tau_i)$ increases at most logarithmically with $N$ \cite{Ziemann2023tutorial}. As a result, $N_{0}(N,\delta,\beta,i)$ grows polynomially with $N$.
	\end{remark}
	
	The condition of PE is summarized as follows:
	\begin{lemma} \label{Lem1}
		Fix a failure probability $0<\delta<1$, if $N\geq N_{pe}(\delta,\beta,i)$, then with probability at least $1-\delta$, we have
		\begin{equation} \label{E29}
			{\hat \Sigma}_{i,N} \succcurlyeq \frac{1}{16}\Sigma_{i,\tau_i} ,
		\end{equation}
		where $\lambda_{\text{min}}(\Sigma_{i,\tau_i}) > 0$.
	\end{lemma}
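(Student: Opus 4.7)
The plan is to recognize $\{z_{p,i}(k)\}_{k=1}^{N}$ as a causal Gaussian process and then invoke the recent concentration inequality for the smallest eigenvalue of the empirical covariance of such processes from \cite{Ziemann2023note,Ziemann2023tutorial}. By unrolling \eqref{E1}, each $z_{p,i}(k)$ is a fixed linear function of $\{u_\ell, e_\ell\}$ for $\ell$ at most $k + \tau_i - 1$, and since the innovation and input sequences are jointly i.i.d.\ Gaussian by \assref{Asp1}, the covariate is a centered Gaussian process adapted to the natural filtration of the driving noise. This is precisely the setting in which the cited eigenvalue lower bounds are available.

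Before applying the concentration bound, I would verify the auxiliary claim that $\lambda_{\min}(\Sigma_{i,\tau_i}) > 0$. The sub-block of $z_{p,i}(k)$ corresponding to $(u_p(k), u_i(k))$ has covariance $\sigma_u^2 I$, so it is strictly positive definite on its own. For the $y_p(k)$ block, after removing its linear projection onto the $u$-block, the residual is a linear function of the innovations $\{e_\ell\}$ propagated through the impulse response that produces $H_p$ in \eqref{E4b}; since $H_p$ has identity blocks on its diagonal, this residual covariance dominates a positive multiple of $\sigma_e^2 I$. A Schur-complement argument then yields $\Sigma_{i,\tau_i} \succ 0$.

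With these preliminaries in place, the lemma becomes a direct application of the template result of \cite{Ziemann2023tutorial}: for a $d$-dimensional causal Gaussian process with sub-Gaussian parameter $\sigma$, the empirical covariance satisfies $\hat{\Sigma}_N \succcurlyeq \tfrac{1}{16}\Sigma_\tau$ with probability at least $1-\delta$ whenever $N$ exceeds a burn-in of the form $c_0 \tau \max\{\sigma^2,1\}\bigl(\log(1/\delta) + d\log C_{\text{sys}}(N,\tau)\bigr)$. Specializing with $\tau = \tau_i$, $d = d_i = pn_y + \tau_i n_u$, and noise proxy $\max\{\sigma_e^2,1\}$ reproduces $N_{pe}(\delta,\beta,i)$ in \defref{Def1} verbatim. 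The main obstacle is bookkeeping rather than technique: one must certify the hypotheses of the cited concentration result against our specific covariate (identifying the filtration under which $\{z_{p,i}(k)\}$ is adapted, showing the effective sub-Gaussian parameter is $\max\{\sigma_e^2,1\}$ after absorbing both innovation and input contributions, and verifying that $\tau_i$ is a legitimate representative time in the sense required by \cite{Ziemann2023tutorial}), after which \eqref{E29} follows immediately.
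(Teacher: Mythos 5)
Your proposal matches the paper's approach: the paper's proof consists entirely of deferring to the persistence-of-excitation result for ARX models (Theorem 5.2 of the cited tutorial), which is exactly the template you invoke after identifying $z_{p,i}(k)$ as a causal Gaussian process and checking $\Sigma_{i,\tau_i}\succ 0$. Your write-up is in fact more explicit than the paper's one-line citation, but the route is the same.
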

	
	\begin{proof}
		To conserve space, the complete proof is omitted here, which is identical to PE of the ARX model provided in Theorem 5.2 of \cite{Ziemann2023tutorial}.
	\end{proof}
	
	\begin{remark} \label{Rmk3}
		We emphasize that the above PE result is also helpful for analyzing the impact of weighting matrices $W_1$ and $W_2$ in the weighted SVD step \eqref{E13}, which will be investigated in detail in the future. To illustrate this, it has been shown in \cite{Qin2005novel} that $ \left(Z_p\Pi_{U_f}^{\perp}Z_p^\top\right)^{1/2}$ is approximately the optimal choice for $W_2$. A prerequisite is that $Z_p\Pi_{U_f}^{\perp}Z_p^\top$ should be positive definite, i.e., $\frac{1}{N}U_fU_f^\top \succ 0$ and $\frac{1}{N}Z_p\Pi_{U_f}^{\perp}Z_p^\top \succ 0$. According to the Schur complement, this is equivalent to 
		\begin{equation} \label{E30}
			\frac{1}{N}\begin{bmatrix}Z_pZ_p^\top & Z_pU_f^\top \\U_fZ_p^\top &U_fU_f^\top
			\end{bmatrix} ={\hat \Sigma}_{f,N}  \succ 0,
		\end{equation}
		which is essentially same as PE in Lemma \ref{Lem1} by taking $i=f$.
	\end{remark}
	
	\subsection{Stochastic Error } \label{Sct3.2}
	
	The bound of the stochastic error ${\tilde \theta}_{i}^S$ is based on the following three events:
	\begin{subequations} \label{E31}
		\begin{align}
			\mathcal{E}_{i,1} \triangleq &\left\{{\hat \Sigma}_{i,N}  \succcurlyeq \frac{1}{16} {\Sigma}_{i,\tau_i}  \right\},\\
			\mathcal{E}_{i,2} \triangleq &\left\{{\hat \Sigma}_{i,N}  \preccurlyeq \frac{3d_i}{\delta} {\Sigma}_{i,N}  \right\}, \\
			\nonumber
			\mathcal{E}_{i,3} \triangleq &\left\{ \norm{ \sum_{k=1}^{N}e_i(k)z_{p,i}(k)^\top \left(\Sigma + N {\hat \Sigma}_{i,N}\right)^{-1/2}}^2 \right. \leq\\ 
			&  \left.4\sigma_e^2{\text{log}\frac{\text{det}(\Sigma + N {\hat \Sigma}_{i,N})}{\text{det}\left(\Sigma\right)}} + 8\sigma_e^2\left(n_y\text{log}5+\text{log}\frac{3}{\delta} \right) \right\},
		\end{align}
	\end{subequations}
	where $\mathbb{P}(\mathcal{E}_{i,j}^c) \leq \delta/3$ for $j=1,2,3$. 
	The event $\mathcal{E}_{i,1}$ is due to PE in Lemma \ref{Lem1}, the event  $\mathcal{E}_{i,2}$ is derived from the matrix Markov inequality \cite{Ziemann2023tutorial}, and the event $\mathcal{E}_{i,3}$ is based on the result of self-normalized martingales \cite{Abbasi2011online}. The signal-to-noise ratio (SNR) is defined as 
	\begin{equation} \label{E32}
		{{\rm{SNR}}_{i,k}}  \triangleq \frac{\lambda_{\text{min}}({\Sigma}_{i,k})}{\sigma_e^2},
	\end{equation}
	and is assumed to be uniformly lower bounded for all possible $p$, which has been proven to be a quite general assumption\cite{Ziemann2023tutorial}. The bound of the stochastic error ${\tilde \theta}_{i}^S$ is summarized in the following lemma:
	\begin{lemma} \label{Lem2}
		Fix a failure probability $0<\delta<1$, if $N \geq N_{pe}(\delta/3,\beta,i)$, then with probability at least $1-\delta$, 
		\begin{equation} \label{E33}
			\norm{{\tilde \theta}_{i}^S}^2 \leq \frac{c\norm{H_{fi}}^2}{{{\rm{SNR}}_{i,\tau_i}}N} (d_i{\rm{log}}\frac{d_i}{\delta} + {\rm{log}} {\rm{det}}({\Sigma}_{i,N}{{\Sigma}_{i,\tau_i}^{-1}})),
		\end{equation}
		where $c$ is a universal constant which is independent of the system, $\delta$ and $\beta$.
	\end{lemma}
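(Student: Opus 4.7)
The plan is to work on the intersection $\mathcal{E}_{i,1}\cap\mathcal{E}_{i,2}\cap\mathcal{E}_{i,3}$, which by Bonferroni's inequality has probability at least $1-\delta$ since each complementary event is bounded by $\delta/3$. Writing $V_N \triangleq N\hat\Sigma_{i,N}$ and $S_N\triangleq \sum_{k=1}^N e_i(k)z_{p,i}^\top(k)$, equation \eqref{E26} gives $\tilde\theta_i^S = H_{fi}S_N V_N^{-1}$, and the submultiplicative split $\norm{\tilde\theta_i^S}^2 \leq \norm{H_{fi}}^2\,\norm{S_N V_N^{-1/2}}^2\,\norm{V_N^{-1}}$ separates a self-normalized martingale factor from a minimum-eigenvalue factor.

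On $\mathcal{E}_{i,1}$, persistence of excitation immediately controls the eigenvalue factor: $\norm{V_N^{-1}} \leq 16/(N\lambda_{\min}(\Sigma_{i,\tau_i})) = 16/(N\sigma_e^2\,\mathrm{SNR}_{i,\tau_i})$, which is exactly the prefactor $1/(N\,\mathrm{SNR}_{i,\tau_i})$ in the claim. To bound $\norm{S_N V_N^{-1/2}}$ using $\mathcal{E}_{i,3}$, I would introduce the regularizer $\Sigma=\Sigma_{i,\tau_i}$ and factor $S_N V_N^{-1/2} = \bigl[S_N(\Sigma_{i,\tau_i} + V_N)^{-1/2}\bigr]\bigl[(\Sigma_{i,\tau_i} + V_N)^{1/2} V_N^{-1/2}\bigr]$. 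On $\mathcal{E}_{i,1}$ the right factor satisfies $\norm{(\Sigma_{i,\tau_i}+V_N)^{1/2}V_N^{-1/2}}^{2} = \norm{I + V_N^{-1/2}\Sigma_{i,\tau_i} V_N^{-1/2}} \leq 1 + 16/N$, an absolute constant, while $\mathcal{E}_{i,3}$ controls the left factor by $\log[\det(\Sigma_{i,\tau_i}+V_N)/\det(\Sigma_{i,\tau_i})]$ plus an $O(n_y + \log(1/\delta))$ additive term.

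The final step is to reduce the determinant ratio to the compact $\log\det(\Sigma_{i,N}\Sigma_{i,\tau_i}^{-1})$ appearing in the statement. From $\mathcal{E}_{i,2}$ we have $\hat\Sigma_{i,N}\preccurlyeq (3d_i/\delta)\Sigma_{i,N}$, so monotonicity of $\log\det$ yields $\log[\det(\Sigma_{i,\tau_i}+V_N)/\det(\Sigma_{i,\tau_i})] \leq \log\det\!\bigl(I + (3Nd_i/\delta)\Sigma_{i,\tau_i}^{-1/2}\Sigma_{i,N}\Sigma_{i,\tau_i}^{-1/2}\bigr)$. Since the covariances of the causal Gaussian process are monotone in the horizon under marginal stability, $\Sigma_{i,N}\succcurlyeq \Sigma_{i,\tau_i}$ for $N\geq\tau_i$, so $M\triangleq\Sigma_{i,\tau_i}^{-1/2}\Sigma_{i,N}\Sigma_{i,\tau_i}^{-1/2}\succcurlyeq I$; applying $I+\alpha M \preccurlyeq (1+\alpha)M$ the expression above is bounded by $d_i\log(1+3Nd_i/\delta) + \log\det(\Sigma_{i,N}\Sigma_{i,\tau_i}^{-1})$. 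Collecting all terms and absorbing universal numerical constants together with the $d_i\log N$ contribution (which is of the same order as the unavoidable $\log\det$ term) into the constant $c$ yields \eqref{E33}.

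The main obstacle is the matrix-algebra bookkeeping at the pivot from $(\Sigma_{i,\tau_i}+V_N)^{-1/2}$ to $V_N^{-1/2}$: the regularizer $\Sigma$ must be chosen to match the PE floor of $\mathcal{E}_{i,1}$ so the change-of-basis factor remains $O(1)$, while still allowing a clean simplification of the determinant ratio. The remainder is routine once the three events and their combined probability estimate are in place.
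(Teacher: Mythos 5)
Your overall architecture coincides with the paper's: work on the intersection $\mathcal{E}_{i,1}\cap\mathcal{E}_{i,2}\cap\mathcal{E}_{i,3}$ (union bound giving probability at least $1-\delta$), split $\norm{{\tilde \theta}_{i}^S}^2$ into a self-normalized noise factor times an excitation factor $\norm{{\hat \Sigma}_{i,N}^{-1}}$, bound the excitation factor via PE ($\mathcal{E}_{i,1}$) to produce the $1/(N\,\mathrm{SNR}_{i,\tau_i})$ prefactor, and control the noise factor via the self-normalized martingale event $\mathcal{E}_{i,3}$ combined with the matrix Markov event $\mathcal{E}_{i,2}$. The one place you genuinely diverge is the choice of regularizer in the self-normalized step, and that is where the argument fails to deliver the stated bound.

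You instantiate the regularizer as $\Sigma=\Sigma_{i,\tau_i}$. This makes your change-of-basis factor $1+16/N$, which is fine, but it leaves the determinant ratio as $\log\det\bigl(I+(3Nd_i/\delta)\,\Sigma_{i,\tau_i}^{-1/2}\Sigma_{i,N}\Sigma_{i,\tau_i}^{-1/2}\bigr)$, i.e., after your $I+\alpha M\preccurlyeq(1+\alpha)M$ step, a term $d_i\log(3Nd_i/\delta)$. The extra $d_i\log N$ cannot be absorbed into the universal constant $c$ of \eqref{E33}: $c$ must be independent of the system and, being a constant, of $N$, and your justification that $d_i\log N$ is ``of the same order as the unavoidable $\log\det$ term'' fails for strictly stable systems, where $\Sigma_{i,N}$ converges and $\log\det(\Sigma_{i,N}\Sigma_{i,\tau_i}^{-1})=O(1)$ while $d_i\log N\to\infty$. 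As written you prove a bound with $d_i\log(Nd_i/\delta)$ in place of $d_i\log(d_i/\delta)$, which is strictly weaker than the lemma. The fix is exactly the choice the paper makes: scale the regularizer with $N$, taking $\Sigma_i=\tfrac{N}{16}\Sigma_{i,\tau_i}$, so that on $\mathcal{E}_{i,1}$ one still has $2N{\hat \Sigma}_{i,N}\succcurlyeq \Sigma_i+N{\hat \Sigma}_{i,N}$ (change-of-basis factor at most $2$), while the determinant ratio becomes $\det(I+16\,\Sigma_{i,\tau_i}^{-1}{\hat \Sigma}_{i,N})\leq\det(I+\tfrac{48d_i}{\delta}\Sigma_{i,\tau_i}^{-1}\Sigma_{i,N})$; the factors of $N$ cancel, and the same $(1+\alpha)M$ step then yields $d_i\log\tfrac{48d_i}{\delta}+\log\det(\Sigma_{i,N}\Sigma_{i,\tau_i}^{-1})$ with no residual $\log N$. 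The rest of your argument (the exact factorization of $S_NV_N^{-1/2}$, the monotonicity $\Sigma_{i,N}\succcurlyeq\Sigma_{i,\tau_i}$, and the probability accounting) is sound and matches the paper.
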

	\begin{proof}
		We sketch the proof here, which is similar to the proof of Theorem 5.1 in \cite{Ziemann2023tutorial}. The main idea is to prove that the event in Lemma \ref{Lem2} is subsumed by the intersection of three events $\mathcal{E}_{i,1}$, $\mathcal{E}_{i,2}$ and $\mathcal{E}_{i,3}$, with $\mathbb{P}(\mathcal{E}_{i,1}\cap\mathcal{E}_{i,2}\cap\mathcal{E}_{i,3}) \geq 1-\delta$, due to the fact that  $\mathbb{P}(\mathcal{E}_{i,1}^c\cup\mathcal{E}_{i,2}^c\cup\mathcal{E}_{i,3}^c) \leq \delta$. According to \eqref{E26}, we have 
		\begin{equation} \label{E34}
			\norm{{\tilde \theta}_{i}^S}^2 \leq  \underbrace{\norm{H_{fi}\sum_{k=1}^{N}\frac{1}{N}{e_i(k)z_{p,i}^\top(k)} {\hat \Sigma}_{i,N}^{-1/2}}^2}_{\text{noise term}} \underbrace{\norm{{\hat \Sigma}_{i,N}^{-1}}}_{\text{excitation term}}.
		\end{equation}
		The excitation term is bounded based on the event $\mathcal{E}_{i,1}$, i.e., 
		\begin{equation} \label{E35}
			\norm{{\hat \Sigma}_{i,N}^{-1}} \leq 16\lambda_{\text{min}}^{-1}({\Sigma}_{i,\tau_i}).
		\end{equation}
		For the noise term, we bound it by mimicking the form of the event $\mathcal{E}_{i,3}$. Due to $\mathcal{E}_{i,1}$, we have $2N{\hat \Sigma}_{i,N} \succcurlyeq N{\hat \Sigma}_{i,N} + \Sigma_i$, where $\Sigma_i = \frac{N}{16}{\Sigma}_{i,\tau_i}$. Hence, the noise term can be formulated as
		\begin{equation} \label{E36}
			\begin{split}
				&\norm{\frac{H_{fi}}{\sqrt{N}}\sum_{k=1}^{N}e_i(k)z_{p,i}(k)^\top({N\hat \Sigma}_{i,N})^{-1/2}}^2\leq\\
				&\frac{2\norm{H_{fi}}^2}{N}\norm{\sum_{k=1}^{N}e_i(k)z_{p,i}(k)^\top({\Sigma_i+N\hat \Sigma}_{i,N})^{-1/2}}^2.
			\end{split}
		\end{equation}
		Furthermore, according to $\mathcal{E}_{i,3}$ and $\mathcal{E}_{i,2}$, we have
		\begin{equation} \label{E37}
			\begin{split}
				&\norm{\sum_{k=1}^{N}e_i(k)z_{p,i}(k)^\top\left({\Sigma_i+N\hat \Sigma}_{i,N}\right)^{-1/2}}^2 \leq \\
				& 4\sigma_e^2{\text{log}\frac{\text{det}(\Sigma_i + N {\hat \Sigma}_{i,N})}{\text{det}\left(\Sigma_i\right)}} + 8\sigma_e^2\left(n_y\text{log}5+\text{log}\frac{3}{\delta}\right) \leq \\
				&4\sigma_e^2{\text{log}\text{det}(I + 48\frac{d_i}{\delta} {\Sigma}_{i,N}{\Sigma}_{i,\tau_i}^{-1})} + 8\sigma_e^2(n_y\text{log}5 + \text{log}\frac{3}{\delta}).\\
			\end{split}    	
		\end{equation}
		Combining \eqref{E36} and \eqref{E37} with \eqref{E35}, and simplifying them properly, the result \eqref{E33} is obtained.
	\end{proof}
	
	\subsection{Truncation Bias Term} \label{Sct3.3}
	
	As for the bias term ${\tilde \theta}_{i}^B$, we will see that it is dominated by the stochastic error ${\tilde \theta}_{i}^S$, given a proper selection of $p$ and the fact that the system is stable \cite{Ziemann2023tutorial}. The bias term is similarly decomposed as
	\begin{equation} \label{E38}
		{\tilde \theta}_{i}^B= \left(\Gamma_{fi}\bar A^p\sum_{k=1}^{N}{x_{k}z_{p,i}^\top(k)}(N\hat \Sigma_{i,N})^{-1/2}\right)(N\hat \Sigma_{i,N})^{-1/2}.
	\end{equation}
	Note that ${N\hat \Sigma}_{i,N} = \sum_{k=1}^{N}z_{p,i}(k)z_{p,i}^\top(k) \succcurlyeq z_{p,i}(k)z_{p,i}^\top(k)$, hence we have
	\begin{equation} \label{E39}
		\norm{z_{p,i}^\top(k)\left({N\hat \Sigma}_{i,N}\right)^{-1}z_{p,i}(k)} \leq 1.
	\end{equation}
	Using the triangle inequality,
	\begin{equation} \label{E40}
		\begin{split}
			&\norm{\sum_{k=1}^{N}{x_{k}z_{p,i}^\top(k)} \left({N\hat \Sigma}_{i,N}\right)^{-1/2}} \leq \\ 
			&\sum_{k=1}^{N}\norm{x_{k}} \norm{z_{p,i}^\top(k)\left({N\hat \Sigma}_{i,N}\right)^{-1/2}}.
		\end{split}	
	\end{equation}
	Then based on \eqref{E39} and the Cauchy-Schwarz inequality, we have
	\begin{equation} \label{E41}
		\norm{\sum_{k=1}^{N}{x_{k}z_{p,i}^\top(k)} \left({N\hat \Sigma}_{i,N}\right)^{-1/2}} \leq
		\sqrt{N\sum_{k=1}^{N}\norm{x_{k}}^2}.
	\end{equation}
	Now we introduce the following lemma:
	\begin{lemma} [Lemma E.5 in \cite{Ziemann2023tutorial}] \label{Lem3}
		Fix a failure probability $\delta$, there exists a universal constant $c$ such that with probability at least $1-\delta$, 
		\begin{equation} \label{E42}
			\sum_{k=1}^{N}\norm{x_{k}}^2 \leq c \sigma_e^2 n_x N \norm{\Sigma_{x,N}} {\rm{log}} \frac{1}{\delta}.
		\end{equation}
	\end{lemma}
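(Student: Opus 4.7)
The plan is to treat $\sum_{k=1}^N \norm{x_k}^2$ as a quadratic form in a Gaussian vector and apply a Hanson-Wright type concentration inequality. Iterating the state recursion from $x_1 = 0$ yields $x_k = \sum_{j=1}^{k-1} A^{k-1-j}(B u_j + K e_j)$, so under \assref{Asp1} the stacked vector $X = [x_1^\top, \ldots, x_N^\top]^\top \in \mathbb{R}^{N n_x}$ is a centered Gaussian with some joint covariance $\Sigma_X$, and the target sum coincides with the quadratic form $X^\top X$.

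The first step is to bound the mean: $\mathbb{E}[X^\top X] = \text{tr}(\Sigma_X) = \sum_{k=1}^N \text{tr}(\Sigma_{x,k})$. The state covariance satisfies the Lyapunov recursion $\Sigma_{x,k+1} = A \Sigma_{x,k} A^\top + \sigma_e^2 K K^\top + \sigma_u^2 B B^\top$ with $\Sigma_{x,1} = 0$, so the sequence $\{\Sigma_{x,k}\}$ is monotonically non-decreasing in the positive semidefinite order. Hence $\Sigma_{x,k} \preccurlyeq \Sigma_{x,N}$ for all $k \leq N$, which gives $\text{tr}(\Sigma_X) \leq N n_x \norm{\Sigma_{x,N}}$.

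The second step is to apply Hanson-Wright to conclude that with probability at least $1-\delta$,
$$X^\top X \leq \text{tr}(\Sigma_X) + c_1 \norm{\Sigma_X}_F \sqrt{\log(1/\delta)} + c_2 \norm{\Sigma_X} \log(1/\delta),$$
for universal constants $c_1, c_2$. Using $\norm{\Sigma_X}_F^2 \leq N n_x \norm{\Sigma_X}^2$ and absorbing the three terms into a single $\log(1/\delta)$ factor then yields the stated form, provided $\norm{\Sigma_X}$ is controlled by a multiple of $\norm{\Sigma_{x,N}}$.

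The main obstacle is precisely this operator norm control. Writing $X = \mathcal{A} \xi$, where $\xi$ stacks the i.i.d. noise and input samples and $\mathcal{A}$ is the block lower-triangular matrix of impulse responses built from $A$, $B$, and $K$, one obtains $\norm{\Sigma_X} = \max(\sigma_e^2,\sigma_u^2)\norm{\mathcal{A}}^2$. A block-Toeplitz/operator-norm argument, exploiting $\rho(A)\leq 1$ over the finite horizon, then relates $\norm{\mathcal{A}}^2$ to the diagonal state variance $\norm{\Sigma_{x,N}}$ up to a constant independent of $N$; the $\sigma_e^2$ prefactor in the final bound emerges from this noise-scaling step.
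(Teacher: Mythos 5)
The paper does not prove this lemma itself; it is imported verbatim as Lemma~E.5 of the cited tutorial, so there is no in-paper argument to compare against. Your overall route --- write $\sum_{k=1}^N \norm{x_k}^2$ as a Gaussian quadratic form $X^\top X$ with $X\sim\mathcal{N}(0,\Sigma_X)$, bound the mean via the Lyapunov recursion and the monotonicity $\Sigma_{x,k}\preccurlyeq\Sigma_{x,N}$, and apply a Hanson--Wright/Laurent--Massart deviation bound --- is sound and is essentially the standard way such a statement is established.

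There is, however, one genuine flaw in the step you yourself flag as the main obstacle. The claim that a block-Toeplitz argument ``exploiting $\rho(A)\leq 1$'' relates $\norm{\mathcal{A}}^2$ (hence $\norm{\Sigma_X}$) to $\norm{\Sigma_{x,N}}$ \emph{up to a constant independent of $N$} is false in the marginally stable case that \assref{Asp1} explicitly permits. Take the scalar random walk $x_{k+1}=x_k+e_k$: then $\norm{\Sigma_{x,N}}=N-1$, while the covariance matrix $\Sigma_X$ with entries $\min(k,l)-1$ has operator norm of order $N^2$, so the ratio $\norm{\Sigma_X}/\norm{\Sigma_{x,N}}$ grows linearly in $N$. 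The $\mathcal{H}_\infty$-type bound on the impulse-response Toeplitz operator that you have in mind requires $\rho(A)<1$ strictly. Fortunately the step is also unnecessary: since $\Sigma_X\succcurlyeq 0$, the crude chain $\norm{\Sigma_X}\leq\norm{\Sigma_X}_F\leq\operatorname{tr}(\Sigma_X)\leq N n_x\norm{\Sigma_{x,N}}$ already controls all three Laurent--Massart terms, giving $X^\top X\leq N n_x\norm{\Sigma_{x,N}}\bigl(1+2\sqrt{t}+2t\bigr)$ with $t=\log(1/\delta)$, which is absorbed into $c\,N n_x\norm{\Sigma_{x,N}}\log(1/\delta)$ for $\delta\leq 1/e$. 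Replace the final paragraph of your argument with this observation and the proof closes. (A minor side remark: since $\Sigma_{x,N}=\mathbb{E}x_Nx_N^\top$ as defined in \eqref{E25} already carries the noise variance, the extra $\sigma_e^2$ prefactor in \eqref{E42} is slack inherited from the reference's normalization rather than something your derivation needs to produce.)
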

	Using this lemma, inequality \eqref{E41} can be further simplified as 
	\begin{equation} \label{E43}
		\norm{\sum_{k=1}^{N}{x_{k}z_{p,i}^\top(k)} \left({N\hat \Sigma}_{i,N}\right)^{-1/2}} \leq \sqrt{c \sigma_e^2 n_x  \norm{\Sigma_{x,N}} N{\rm{log}} \frac{1}{\delta}}.
	\end{equation}
	Combining with the event $\mathcal{E}_{i,1}$, now we bound the bias as
	\begin{equation} \label{E44}
		\norm{{\tilde \theta}_{i}^B}^2 \leq \frac{16cn_x}{{{\rm{SNR}}_{i,\tau_i}}} \norm{\Gamma_{fi}\bar A^p}\norm{\Sigma_{x,N}}N{\rm{log}} \frac{1}{\delta}.
	\end{equation}
	According to Assumption \ref{Asp2}, by taking $p = \beta {\rm{log}} N$ such that
	\begin{equation} \label{E45}
		\norm{\Gamma_{fi}\bar A^p}\norm{\Sigma_{x,N}} \leq N^{-3},
	\end{equation}
	the bias error can be further bounded as
	\begin{equation} \label{E46}
		\norm{{\tilde \theta}_{i}^B}^2 \leq \frac{16cn_x}{N^2{{\rm{SNR}}_{i,\tau_i}}}{\rm{log}} \frac{1}{\delta}.
	\end{equation}
	
	As we can see, the truncation bias ${\tilde \theta}_{i}^B$ decays as $\mathcal{O}(1/N)$ and is dominated by the stochastic error ${\tilde \theta}_{i}^S$. By merging the two errors ${\tilde \theta}_{i}^S$ and ${\tilde \theta}_{i}^B$ together, and absorbing high order terms into the dominating term by inflating the constants accordingly, we obtain the following theorem.
	\begin{theorem} \label{The1}
		Fix a failure probability $\delta$, if $N \geq N_{pe}(\frac{\delta}{3},\beta,i)$, then with probability at least $1-2\delta$,
		\begin{equation} \label{E47}
			\norm{{\tilde \theta}_{i}}^2 \leq \frac{c\norm{H_{fi}}^2}{{{\rm{SNR}}_{i,\tau_i}}N} \left(d_i{\rm{log}}\frac{d_i}{\delta} + {\rm{log}} {\rm{det}}({\Sigma}_{i,N}{{\Sigma}_{i,\tau_i}^{-1}})\right),
		\end{equation}
		where $c$ is a universal constant which is independent of the system, $\delta$ and $\beta$.
	\end{theorem}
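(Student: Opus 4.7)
The plan is to decompose $\tilde\theta_i = \tilde\theta_i^S + \tilde\theta_i^B$ as in (23), apply the elementary inequality $\norm{\tilde\theta_i}^2 \le 2\norm{\tilde\theta_i^S}^2 + 2\norm{\tilde\theta_i^B}^2$, and combine the two bounds already developed in Sections \ref{Sct3.2} and \ref{Sct3.3} via a union bound. The governing observation is that, under Assumption \ref{Asp2}, the truncation bias decays one full order faster in $N$ than the stochastic error, so once both estimates hold simultaneously the bias can be folded into the stochastic term by inflating the universal constant.

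First I would invoke Lemma \ref{Lem2} with failure parameter $\delta$: under the sample-size condition $N \ge N_{pe}(\delta/3,\beta,i)$, it holds on $\mathcal{E}_{i,1}\cap\mathcal{E}_{i,2}\cap\mathcal{E}_{i,3}$, whose probability is at least $1-\delta$. Next I would invoke the truncation-bias estimate (46); this estimate relies only on the PE event $\mathcal{E}_{i,1}$ (already controlled above) together with Lemma \ref{Lem3}, which I would apply with failure parameter $\delta$. A union bound over the Lemma \ref{Lem3} event and the three events of Lemma \ref{Lem2} then shows that both (33) and (46) hold simultaneously with probability at least $1-2\delta$.

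On this joint event, adding the two estimates and using $(a+b)^2\le 2a^2+2b^2$ yields
\[ \norm{\tilde\theta_i}^2 \leq \frac{2c\norm{H_{fi}}^2}{{\rm{SNR}}_{i,\tau_i}\,N}\!\left(d_i\log\frac{d_i}{\delta} + \log\det(\Sigma_{i,N}\Sigma_{i,\tau_i}^{-1})\right) + \frac{32 c\, n_x}{N^2\,{\rm{SNR}}_{i,\tau_i}}\log\frac{1}{\delta}. \]
Because $H_{fi}$ contains the identity block $H_0=I$, we have $\norm{H_{fi}}\ge 1$, and $n_x\le d_i$ together with $N\ge 1$ implies that the second, $\mathcal{O}(1/N^2)$ term is strictly dominated by the first, $\mathcal{O}(1/N)$ term. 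Absorbing it by doubling the universal constant $c$ yields exactly the inequality (47).

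The main obstacle — modest as it is — lies in the probabilistic bookkeeping: one must ensure that the PE event $\mathcal{E}_{i,1}$ is accounted for only once across the two bounds, so that the only extra $\delta$-failure charged beyond Lemma \ref{Lem2} is the one coming from Lemma \ref{Lem3}. Once the events are enumerated correctly, the remaining step is the trivial observation that an $\mathcal{O}(N^{-2})$ term can always be swallowed by an $\mathcal{O}(N^{-1})$ term after a harmless inflation of constants; no new concentration argument is required.
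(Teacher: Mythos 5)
Your proposal is correct and follows essentially the same route as the paper: the paper also obtains \eqref{E47} by taking the stochastic-error bound of \lemref{Lem2} together with the truncation-bias bound \eqref{E46} on their joint event (probability at least $1-2\delta$, charging $\delta$ for the three events behind \lemref{Lem2} and $\delta$ for \lemref{Lem3}), and then absorbing the $\mathcal{O}(1/N^{2})$ bias contribution into the dominating $\mathcal{O}(1/N)$ term by inflating the universal constant. Your explicit use of $(a+b)^{2}\leq 2a^{2}+2b^{2}$ and the care not to double-count $\mathcal{E}_{i,1}$ are just concrete instantiations of what the paper states informally.
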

	
	\section{Robustness of Balanced Realization} \label{sec4}
	
	Following our roadmap, having obtained error bounds for each ARX model in Step 1, now we move to Step 2 and provide the overall bound for \eqref{E17} and error bounds for the balanced realization \eqref{E18}.
	
	\subsection{Overall Bound} \label{Sct4.1}
	
	First, we introduce the following lemma about the block matrix norm:
	\begin{lemma} [Lemma A.1 in \cite{Tsiamis2019finite}] \label{Lem4}
		Let $M$ be a block-column matrix defined as $M = \begin{bmatrix}
			M_1^\top&M_2^\top&\cdots &M_f^\top
		\end{bmatrix}^\top$, where all the $M_i$'s have the same dimension. Then, the block matrix $M$ satisfies
		\begin{equation} \label{E48}
			\norm{M} \leq \sqrt{f} \max\limits_{1\leq i\leq f} \norm{M_i}.
		\end{equation}
	\end{lemma}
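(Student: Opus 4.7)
The plan is to argue directly from the variational characterization of the spectral norm. I would write $\|M\| = \sup_{\|x\|=1} \|Mx\|$ and note that because $M$ is obtained by stacking $M_1,\ldots,M_f$ vertically, applying $M$ to any vector $x$ yields the block-column vector $Mx = \begin{bmatrix}(M_1 x)^\top & \cdots & (M_f x)^\top\end{bmatrix}^\top$. Since the Euclidean norm of a stacked vector is the square root of the sum of the squared norms of the blocks, this gives the decomposition $\|Mx\|^2 = \sum_{i=1}^{f} \|M_i x\|^2$.

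From this identity the conclusion is one step away: each summand is bounded by $\|M_i x\|^2 \leq \|M_i\|^2 \|x\|^2 \leq \max_{1\leq j \leq f}\|M_j\|^2$ whenever $\|x\|=1$, so $\|Mx\|^2 \leq f \cdot \max_{1\leq j \leq f}\|M_j\|^2$. Taking a supremum over the unit sphere and extracting a square root delivers the desired inequality. An equivalent route, which I might record as a sanity check, uses the identity $\|M\|^2 = \|M^\top M\| = \bigl\|\sum_{i=1}^{f} M_i^\top M_i\bigr\|$ together with the triangle inequality for the spectral norm and submultiplicativity, giving $\|\sum_i M_i^\top M_i\| \leq \sum_i \|M_i\|^2 \leq f\max_i\|M_i\|^2$ and the same bound.

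Since the entire argument fits in a few lines, there is no genuine obstacle. The only point deserving explicit mention is dimensional compatibility: the hypothesis that all $M_i$ have the same dimension ensures both that the vertical stacking is well-defined and that each $M_i$ acts on the same domain as $M$, which is precisely what the decomposition $\|Mx\|^2 = \sum_i \|M_i x\|^2$ relies on.
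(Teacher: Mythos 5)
Your argument is correct and complete: the identity $\lVert Mx\rVert^2=\sum_{i=1}^{f}\lVert M_i x\rVert^2$ for vertically stacked blocks, combined with $\lVert M_i x\rVert\le\lVert M_i\rVert$ on the unit sphere, immediately yields the bound, and your alternative route via $M^\top M=\sum_i M_i^\top M_i$ is an equally valid sanity check. The paper itself gives no proof (it only cites Lemma A.1 of the referenced work), and your derivation is exactly the standard argument used there, so nothing further is needed.
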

	Based on this lemma, it is straightforward to obtain the following theorem regarding the error bound of the range space of the extended observability matrix $\widehat {\Gamma_{f}L_p}$.
	\begin{theorem} \label{The1}
		For a fixed probability $\delta$, if 
		\begin{equation} \label{E49}
			N \geq {\rm {max}} \left\{N_{pe}(\frac{\delta}{3f},\beta,i)\right\}, {\rm {for}} \ i=1,2,...f,
		\end{equation}
		then with probability at least $1-2\delta$, we have
		\begin{equation} \label{E50}
			\begin{split}
				&\norm{\widehat {\Gamma_{f}L_p} - {\Gamma_{f}L_p}} \leq \sqrt{f} \max\limits_{1\leq i\leq f} \norm{\tilde{\theta}_i} \leq \\
				&\sqrt{\frac{f}{N}} \max\limits_{1\leq i\leq f} \sqrt{\frac{c\norm{H_{fi}}}{{{\rm{SNR}}_{i,\tau_i}}} (d_i{\rm{log}}\frac{d_i}{\delta} + {\rm{log}} {\rm{det}}({\Sigma}_{i,N}{{\Sigma}_{i,\tau_i}^{-1}}))}. 
			\end{split}			
		\end{equation}
	\end{theorem}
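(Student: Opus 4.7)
The plan is to combine the per-ARX-model bound from Theorem~3.1 with the block-column norm inequality of Lemma~\ref{Lem4} via a union bound across $i=1,\dots,f$.

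First, I would observe that by the stacking in \eqref{E17}, the matrix $\widehat{\Gamma_f L_p} - \Gamma_f L_p$ is a block-column matrix whose $i$-th block is precisely the left portion of $\tilde{\theta}_i=\hat{\theta}_i-\theta_i$ corresponding to the $\widehat{\Gamma_{fi}L_p}$ columns in \eqref{E16}. Since restricting to a subset of columns cannot increase the spectral norm, the $i$-th block has norm at most $\|\tilde{\theta}_i\|$. Lemma~\ref{Lem4} then yields
\[
\|\widehat{\Gamma_f L_p} - \Gamma_f L_p\| \;\leq\; \sqrt{f}\,\max_{1\leq i\leq f}\|\tilde{\theta}_i\|,
\]
so it suffices to control $\max_i\|\tilde{\theta}_i\|$ on a high-probability event.

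Next, I would apply Theorem~3.1 to each ARX model $i$ with failure probability $\delta/f$ in place of $\delta$. This replacement changes the burn-in requirement to $N \geq N_{pe}(\delta/(3f),\beta,i)$, which is satisfied by hypothesis \eqref{E49} since it is assumed for every $i$. On this event $\mathcal{E}_i$ (which holds with probability at least $1 - 2\delta/f$), the bound \eqref{E47} applies with the appropriate $i$-dependent constants $\|H_{fi}\|$, $\mathrm{SNR}_{i,\tau_i}$, $d_i$, $\Sigma_{i,N}$, and $\Sigma_{i,\tau_i}$.

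Finally, Bonferroni's inequality gives $\mathbb{P}(\bigcup_{i=1}^{f}\mathcal{E}_i^{c}) \leq \sum_{i=1}^{f} 2\delta/f = 2\delta$, so the intersection $\bigcap_i \mathcal{E}_i$ holds with probability at least $1-2\delta$. On this intersection I would chain the per-model bound with the block-column inequality and take the maximum over $i$, obtaining \eqref{E50} after absorbing the additive $\log f$ term that appears from the substitution $\delta\mapsto \delta/f$ into the universal constant $c$ (or equivalently into the $\log(d_i/\delta)$ factor, which is what is written in the statement). The argument is essentially bookkeeping; the only subtle point is making sure the probability budget is split consistently, i.e., calibrating the per-model failure level at $\delta/f$ so that the union bound recovers the advertised $1-2\delta$ guarantee and the burn-in threshold matches \eqref{E49}. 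No new concentration arguments are needed beyond those already invoked in Section~\ref{sec3}.
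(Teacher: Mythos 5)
Your proposal matches the paper's intended argument exactly: the paper gives no explicit proof beyond invoking Lemma~\ref{Lem4}, but its roadmap in Section~\ref{Sct2.3} prescribes precisely your steps --- apply the per-ARX bound of Theorem 3.1 at failure level $\delta/f$ so the burn-in condition becomes \eqref{E49}, take the union bound via Bonferroni to get the $1-2\delta$ guarantee, and chain with the block-column norm inequality. Your added observations (that each block of $\widehat{\Gamma_f L_p}-\Gamma_f L_p$ is a column restriction of $\tilde{\theta}_i$ and hence has no larger spectral norm, and that the $\log f$ from the substitution $\delta\mapsto\delta/f$ must be absorbed into the constant) are correct refinements of details the paper glosses over.
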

	
	\subsection{Bounds on System Matrices} \label{Sct4.2}
	
	Assume that we know the true value of $\Gamma_{f}L_p$, and the SVD of $\Gamma_{f}L_p$ is 
	\begin{equation} \label{E51}
		\Gamma_{f}L_p = \begin{bmatrix}
			U_{1}&U_0
		\end{bmatrix}\begin{bmatrix}
			\Lambda_{1}&0\\0&0
		\end{bmatrix}\begin{bmatrix}
			V_{1}&V_0
		\end{bmatrix}^\top,
	\end{equation}
	so a balanced realization for $\Gamma_{f}$ and $L_p$ is 
	\begin{equation} \label{E52}
		\bar{\Gamma}_{f} = U_{1}\Lambda_{1}^{1/2}, \bar{L}_p = \Lambda_{1}^{1/2}V_{1}^\top.
	\end{equation}
	Moreover, the system matrices with respect to a similarity transform are obtained according to \eqref{E18}, which are denoted as $\left\{\bar{A},\bar{B},\bar{C},\bar{K}\right\}$. We now have the following result regarding the error bounds of the system matrices:
	\begin{theorem} \label{The2}
		If the following condition is satisfied:
		\begin{equation} \label{E53}
			\norm{\widehat {\Gamma_{f}L_p}- \Gamma_{f}L_p} \leq \frac{\sigma_n (\Gamma_{f}L_p)}{4},
		\end{equation}
		then there exists a unitary matrix $T$ such that, for a fixed probability $\delta$, if 
		\begin{equation} \label{E55}
			N \geq {\rm {max}} \left\{N_{pe}(\frac{\delta}{3f},\beta,i)\right\}, {\rm {for}} \ i=1,2,...f,
		\end{equation}
		then with probability at least $1-2\delta$, we have
		\begin{subequations} \label{E56}
			\begin{align}
				\nonumber
				&{\rm {max}}\left\{\norm{\hat{\Gamma}_{f}- \bar{\Gamma}_{f}T},\norm{\hat{L}_p- T^\top\bar{L}_p}\right\} \leq \\
				&2\sqrt{\frac{10n}{\sigma_n (\Gamma_{f}L_p)}} \norm{\widehat {\Gamma_{f}L_p}- \Gamma_{f}L_p},\\
				\nonumber
				&{\rm {max}}\left\{\norm{\hat{C}- \bar{C}T},\norm{\hat{K}- T^\top\bar{K}},\norm{\hat{B}- T^\top\bar{B}} \right\} \leq \\ 
				& \norm{\hat{\Gamma}_{f}- \bar{\Gamma}_{f}T}, \\
				&\norm{\hat{A}- T^\top \bar{A}T} \leq \frac{\sqrt{\norm{\Gamma_{f}L_p}}+\sigma_o}{\sigma_o^2}\norm{\hat{\Gamma}_{f}- \bar{\Gamma}_{f}T},
			\end{align}
		\end{subequations}
		where $\sigma_o = {\rm{min}}\left(\sigma_n(\hat{\Gamma}_f \underline{I}),\sigma_n(\bar{\Gamma}_f \underline{I})\right)$.
	\end{theorem}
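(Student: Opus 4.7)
The plan is to reduce the theorem to three pieces: an SVD-perturbation bound that produces a single unitary $T$ simultaneously aligning $\hat{\Gamma}_f$ with $\bar{\Gamma}_f$ and $\hat{L}_p$ with $\bar{L}_p$; the trivial sub-matrix relations already baked into \eqref{E18}; and a pseudo-inverse perturbation argument for $\hat{A}$.

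The first piece is the main obstacle and does essentially all of the work. Under the gap condition $\|\widehat{\Gamma_f L_p}-\Gamma_f L_p\|\leq\sigma_n(\Gamma_f L_p)/4$, I would invoke a balanced-realization perturbation lemma of the kind used in the Ho--Kalman analyses of Oymak--Ozay and Tsiamis--Pappas. Combining Wedin's $\sin\Theta$ theorem on the top-$n$ singular subspaces with the Lipschitz bound for $\Lambda\mapsto\Lambda^{1/2}$, one shows that the optimal rotation $T\in\mathbb{R}^{n\times n}$ between $\hat{U}_1$ and $U_1$ aligns both factors at once, yielding $\max\{\|\hat{\Gamma}_f-\bar{\Gamma}_f T\|,\|\hat{L}_p-T^\top\bar{L}_p\|\}\leq 2\sqrt{10n/\sigma_n(\Gamma_f L_p)}\,\|\widehat{\Gamma_f L_p}-\Gamma_f L_p\|$. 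The $\sqrt{n}$ factor is the price of converting a Frobenius-norm angle bound into a spectral-norm bound, and using the same $T$ on both sides is forced by the constraint $\bar{\Gamma}_f T\cdot T^\top\bar{L}_p=\bar{\Gamma}_f\bar{L}_p$.

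The bounds on $\hat{C}$, $\hat{B}$, and $\hat{K}$ are then immediate from \eqref{E18}: $\hat{C}-\bar{C}T$ is exactly the top $n_y$ rows of $\hat{\Gamma}_f-\bar{\Gamma}_f T$, whereas $\hat{B}-T^\top\bar{B}$ and $\hat{K}-T^\top\bar{K}$ are column sub-blocks of $\hat{L}_p-T^\top\bar{L}_p$. Passing to sub-matrices cannot increase the spectral norm, so each is bounded by the common quantity appearing in the first inequality.

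For $\hat{A}$ I would use the shift identity $\hat{A}=(\underline{I}\hat{\Gamma}_f)^\dagger(\bar{I}\hat{\Gamma}_f)$ together with the invariance $T^\top(\underline{I}\bar{\Gamma}_f)^\dagger=(\underline{I}\bar{\Gamma}_f T)^\dagger$ that holds for unitary $T$ and full-column-rank matrices. Setting $X=\underline{I}\bar{\Gamma}_f T$, $\hat{X}=\underline{I}\hat{\Gamma}_f$, $Y=\bar{I}\bar{\Gamma}_f T$, $\hat{Y}=\bar{I}\hat{\Gamma}_f$, the standard splitting
\[
\hat{A}-T^\top\bar{A}T=\hat{X}^\dagger(\hat{Y}-Y)+(\hat{X}^\dagger-X^\dagger)Y
\]
bounds the first term by $\|\hat{\Gamma}_f-\bar{\Gamma}_f T\|/\sigma_n(\hat{X})$ and, via the classical pseudo-inverse perturbation $\|\hat{X}^\dagger-X^\dagger\|\leq\sqrt{2}\|\hat{X}-X\|/(\sigma_n(X)\sigma_n(\hat{X}))$ combined with $\|Y\|\leq\sqrt{\sigma_1(\Gamma_f L_p)}\leq\sqrt{\|\Gamma_f L_p\|}$, the second by $\sqrt{\|\Gamma_f L_p\|}\,\|\hat{\Gamma}_f-\bar{\Gamma}_f T\|/(\sigma_n(X)\sigma_n(\hat{X}))$, up to absolute constants. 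Replacing $\sigma_n(X)$ and $\sigma_n(\hat{X})$ by $\sigma_o=\min(\sigma_n(\hat{\Gamma}_f\underline{I}),\sigma_n(\bar{\Gamma}_f\underline{I}))$ gives the stated factor $(\sqrt{\|\Gamma_f L_p\|}+\sigma_o)/\sigma_o^2$. The high-probability statement and the burn-in $N_{pe}(\delta/(3f),\beta,i)$ are inherited verbatim from Theorem~\ref{The1}, which already controls $\|\widehat{\Gamma_f L_p}-\Gamma_f L_p\|$ with probability at least $1-2\delta$.
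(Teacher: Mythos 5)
Your proposal is correct and follows essentially the same route as the paper, which omits the argument entirely and defers precisely to the SVD-robustness machinery of \cite{Oymak2019non,Tsiamis2019finite}: the balanced-realization perturbation lemma producing a common unitary $T$ for both factors, sub-matrix extraction for $\hat{C},\hat{B},\hat{K}$, and a pseudo-inverse perturbation argument for $\hat{A}$. The only slack is the extra $\sqrt{2}$ you concede in the $\hat{A}$ constant; it vanishes if, instead of the generic pseudo-inverse perturbation bound, you exploit the exact shift identity $\bar{I}\bar{\Gamma}_f=(\underline{I}\bar{\Gamma}_f)\bar{A}$ to write $\hat{A}-T^\top\bar{A}T=\hat{X}^\dagger(\hat{Y}-Y)+\hat{X}^\dagger(X-\hat{X})\,T^\top\bar{A}T$ and then bound $\norm{\bar{A}}\le\sqrt{\norm{\Gamma_fL_p}}/\sigma_o$, which yields the stated factor $(\sqrt{\norm{\Gamma_fL_p}}+\sigma_o)/\sigma_o^2$ exactly.
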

	\begin{proof}
		The proof of the above SVD step is similar to \cite{Tsiamis2019finite,Oymak2019non}, thus, it is omitted here.
	\end{proof}
	Now we discuss the implications of our main results Theorems \ref{The1} and \ref{The2}. 
	\begin{enumerate}
		\item The dimension of the system affects PE and error bounds: As shown in \eqref{E28} and \eqref{E50}, the number of samples $N$ and the error bound scale with the dimension $d_i = p(n_u+n_y) + in_u$, which corresponds to the intuition that to estimate $\theta_i \in \mathbb{R}^{n_y \times d_i}$, we need at least as many number of independent equations coming from measurements.
		\item The error bounds scale linearly with ${\rm{log}}\frac{1}{\delta}$, which is consistent with the result in the asymptotic regime \cite{Ziemann2023tutorial}.
		\item According to Theorem \ref{The1}, the total error bound of $\norm{\widehat {\Gamma_{f}L_p}- \Gamma_{f}L_p}$ decays as $\mathcal{O}(1/\sqrt{N})$. This conclusion is drawn based on the following observations: The error bound scales linearly with the logarithm of  ${\rm{det}}({\Sigma}_{i,N}{{\Sigma}_{i,\tau_i}^{-1}})$ which increases at most polynomially with $N$, the SNR is uniformly lower bounded, $\norm{H_{fi}}$ is bounded and the future horizon $f$ is fixed.
		\item According to Theorem \ref{The2}, the error bounds of the system matrices $\left\{A,B,C,K\right\}$ depend linearly on the error bound of $\norm{\widehat {\Gamma_{f}L_p}- \Gamma_{f}L_p}$, thus, they also decay as $\mathcal{O}(1/\sqrt{N})$. This indicates that as long as the identification error of $\Gamma_{f}L_p$ is small, the results from the subsequent SVD step are robust.
	\end{enumerate}
	\section{Discussion and Future Work} \label{sec5}
	
	This paper presents a finite sample analysis of a parsimonious SIM that estimates a bank of ARX models using OLS in parallel. It reveals that the convergence rates for estimating Markov parameters and system matrices are  $\mathcal{O}(1/\sqrt{N})$, consistent with classical asymptotic results. Besides, PE and the role of past horizon are also discussed. Although the algorithm studied in this paper streamlines the weighted SVD and realization steps, we believe that the findings herein pave the way for comprehensively grasping the broader landscape of the family of SIMs. Here are a few aspects we would like to discuss:
	\begin{enumerate}
		\item Our bound is not tight: Just like the existing bounds for partially observed systems, our bound is not tight, either. In this paper, PE in the $f$ ARX models are dealt with separately, which is convenient but somewhat conservative, given that the past output and input $Z_p$ is reused for every estimation. Our bound can be further optimized; however, akin to the challenge of finding an efficient SIM in the asymptotic regime, the pursuit of a lower bound in the non-saymptotic regime for partially observed systems is more challenging.
		\item Selection of past and future horizons: A similar suggestion as other work is given on the selection of the past horizon $p$, however, the guidance for the future horizon $f$ is not given. While it is evident that $f$ influences the error bound, a comprehensive analysis of its impact is not given. Using asymptotic theory \cite{Bauer2002some}, it has been shown that the variance of the estimates of the CVA method improves as $f$ increases, thus, it is interesting to study the impact of $f$ in the non-asymptotic regime.
		\item Comparison between different weighting matrices: We consider a simplified SIM, i.e., taking the weighting matrices $W_1 = W_2 = I$. It is generally believed that the optimal choices usually depend on specific data sets, like the CVA method. In the future, we will analyze the influence of different weighting matrices on the realization step. As we remark in the persistence of excitation, our work provides some preliminary insights on the validity of existing weighting matrices.
	\end{enumerate}
	

	\bibliographystyle{IEEEtran}
	\bibliography{refs}

\begin{thebibliography}{10}
\providecommand{\url}[1]{#1}
\csname url@samestyle\endcsname
\providecommand{\newblock}{\relax}
\providecommand{\bibinfo}[2]{#2}
\providecommand{\BIBentrySTDinterwordspacing}{\spaceskip=0pt\relax}
\providecommand{\BIBentryALTinterwordstretchfactor}{4}
\providecommand{\BIBentryALTinterwordspacing}{\spaceskip=\fontdimen2\font plus
\BIBentryALTinterwordstretchfactor\fontdimen3\font minus
  \fontdimen4\font\relax}
\providecommand{\BIBforeignlanguage}[2]{{%
\expandafter\ifx\csname l@#1\endcsname\relax
\typeout{** WARNING: IEEEtran.bst: No hyphenation pattern has been}%
\typeout{** loaded for the language `#1'. Using the pattern for}%
\typeout{** the default language instead.}%
\else
\language=\csname l@#1\endcsname
\fi
#2}}
\providecommand{\BIBdecl}{\relax}
\BIBdecl

\bibitem{Ho1966effective}
B.~L. Ho and R.~E. K{\'a}lm{\'a}n, ``Effective construction of linear
  state-variable models from input/output functions,''
  \emph{Automatisierungstechnik}, vol.~14, no. 1-12, pp. 545--548, 1966.

\bibitem{Qin2003closed}
S.~J. Qin and L.~Ljung, ``Closed-loop subspace identification with innovation
  estimation,'' in \emph{Proc. 13th IFAC Symp. Syst. Identification},
  Netherlands, 2003.

\bibitem{Jansson2003subspace}
M.~Jansson, ``Subspace identification and {ARX} modeling,'' in \emph{Proc. 13th
  IFAC Symp. Syst. Identification}, Netherlands, 2003.

\bibitem{Ljung1996subspace}
L.~Ljung and T.~McKelvey, ``Subspace identification from closed loop data,''
  \emph{Signal Process.}, vol.~52, no.~2, pp. 209--215, 1996.

\bibitem{Chiuso2005consistency}
A.~Chiuso and G.~Picci, ``Consistency analysis of some closed-loop subspace
  identification methods,'' \emph{Automatica}, vol.~41, no.~3, pp. 377--391,
  2005.

\bibitem{Qin2006overview}
S.~J. Qin, ``An overview of subspace identification,'' \emph{Comput. Chem.
  Eng.}, vol.~30, no. 10-12, pp. 1502--1513, 2006.

\bibitem{van2013closed}
G.~Van~der Veen, J.-W. van Wingerden, M.~Bergamasco, M.~Lovera, and
  M.~Verhaegen, ``Closed-loop subspace identification methods: {A}n overview,''
  \emph{IET Control Theory Appl.}, vol.~7, no.~10, pp. 1339--1358, 2013.

\bibitem{Jansson1998consistency}
M.~Jansson and B.~Wahlberg, ``On consistency of subspace methods for system
  identification,'' \emph{Automatica}, vol.~34, no.~12, pp. 1507--1519, 1998.

\bibitem{Gustafsson2002subspace}
T.~Gustafsson, ``Subspace-based system identification: weighting and
  pre-filtering of instruments,'' \emph{Automatica}, vol.~38, no.~3, pp.
  433--443, 2002.

\bibitem{Jansson2000asymptotic}
M.~Jansson, ``Asymptotic variance analysis of subspace identification
  methods,'' \emph{Proc. 12th IFAC Symp. Syst. Identification}, pp. 91--96,
  2000.

\bibitem{Bauer2005asymptotic}
D.~Bauer, ``Asymptotic properties of subspace estimators,'' \emph{Automatica},
  vol.~41, no.~3, pp. 359--376, 2005.

\bibitem{Chiuso2004asymptotic}
A.~Chiuso and G.~Picci, ``The asymptotic variance of subspace estimates,''
  \emph{J. Econom.}, vol. 118, no. 1-2, pp. 257--291, 2004.

\bibitem{Chiuso2007relation}
A.~Chiuso, ``On the relation between {CCA} and predictor-based subspace
  identification,'' \emph{IEEE Trans. Autom. Control}, vol.~52, no.~10, pp.
  1795--1812, 2007.

\bibitem{Chiuso2007role}
A.~Chiuso, ``The role of vector autoregressive modeling in predictor-based
  subspace identification,'' \emph{Automatica}, vol.~43, no.~6, pp. 1034--1048,
  2007.

\bibitem{Chiuso2007some}
A.~Chiuso, ``Some insights on the choice of the future horizon in closed-loop
  {CCA}-type subspace algorithms,'' in \emph{2007 Am. Control Conf.}, 2007, pp.
  840--845.

\bibitem{Larimore1996statistical}
W.~E. Larimore, ``Statistical optimality and canonical variate analysis system
  identification,'' \emph{Signal Process.}, vol.~52, no.~2, pp. 131--144, 1996.

\bibitem{Bauer2002some}
D.~Bauer and L.~Ljung, ``Some facts about the choice of the weighting matrices
  in {L}arimore type of subspace algorithms,'' \emph{Automatica}, vol.~38,
  no.~5, pp. 763--773, 2002.

\bibitem{Campi2002finite}
M.~C. Campi and E.~Weyer, ``Finite sample properties of system identification
  methods,'' \emph{IEEE Trans. Autom. Control}, vol.~47, no.~8, pp. 1329--1334,
  2002.

\bibitem{Weyer1999finite}
E.~Weyer, R.~C. Williamson, and I.~M. Mareels, ``Finite sample properties of
  linear model identification,'' \emph{IEEE Trans. Autom. Control}, vol.~44,
  no.~7, pp. 1370--1383, 1999.

\bibitem{Sarkar2019near}
T.~Sarkar and A.~Rakhlin, ``Near optimal finite time identification of
  arbitrary linear dynamical systems,'' in \emph{Int. Conf. on Machine Learn.},
  2019, pp. 5610--5618.

\bibitem{Simchowitz2018learning}
M.~Simchowitz, H.~Mania, S.~Tu, M.~I. Jordan, and B.~Recht, ``Learning without
  mixing: Towards a sharp analysis of linear system identification,'' in
  \emph{Conf. On Learn. Theory}, 2018.

\bibitem{Jedra2022finite}
Y.~Jedra and A.~Proutiere, ``Finite-time identification of linear systems:
  Fundamental limits and optimal algorithms,'' \emph{IEEE Trans. Autom.
  Control}, vol.~68, no.~5, pp. 2805--2820, 2022.

\bibitem{Tsiamis2019finite}
A.~Tsiamis and G.~J. Pappas, ``Finite sample analysis of stochastic system
  identification,'' in \emph{IEEE Conf. Decis. Control}, 2019, pp. 3648--3654.

\bibitem{Oymak2019non}
S.~Oymak and N.~Ozay, ``Non-asymptotic identification of {LTI} systems from a
  single trajectory,'' in \emph{Am. Control Conf.}, 2019, pp. 5655--5661.

\bibitem{Sarkar2021finite}
T.~Sarkar, A.~Rakhlin, and M.~A. Dahleh, ``Finite time {LTI} system
  identification,'' \emph{J. Mach. Learn. Res.}, vol.~22, no.~1, pp.
  1186--1246, 2021.

\bibitem{Lale2021finite}
S.~Lale, K.~Azizzadenesheli, B.~Hassibi, and A.~Anandkumar, ``Finite-time
  system identification and adaptive control in autoregressive exogenous
  systems,'' in \emph{Learn. Dyn. Control}.\hskip 1em plus 0.5em minus
  0.4em\relax PMLR, 2021, pp. 967--979.

\bibitem{Tsiamis2023statistical}
A.~Tsiamis, I.~Ziemann, N.~Matni, and G.~J. Pappas, ``Statistical learning
  theory for control: A finite-sample perspective,'' \emph{IEEE Control Syst.},
  vol.~43, no.~6, pp. 67--97, 2023.

\bibitem{Ziemann2023tutorial}
I.~Ziemann, A.~Tsiamis, B.~Lee, Y.~Jedra, N.~Matni, and G.~J. Pappas, ``A
  tutorial on the non-asymptotic theory of system identification,'' in
  \emph{IEEE Conf. Decis. and Control}, 2023, pp. 8921--8939.

\bibitem{Qin2005novel}
S.~J. Qin, W.~Lin, and L.~Ljung, ``A novel subspace identification approach
  with enforced causal models,'' \emph{Automatica}, vol.~41, no.~12, pp.
  2043--2053, 2005.

\bibitem{Van1995unifying}
P.~Van~Overschee and B.~De~Moor, ``A unifying theorem for three subspace system
  identification algorithms,'' \emph{Automatica}, vol.~31, no.~12, pp.
  1853--1864, 1995.

\bibitem{Ziemann2023note}
I.~Ziemann, ``A note on the smallest eigenvalue of the empirical covariance of
  causal gaussian processes,'' \emph{IEEE Trans. Autom. Control}, vol.~69,
  no.~2, pp. 1372--1376, 2023.

\bibitem{Abbasi2011online}
Y.~Abbasi-Yadkori, D.~P{\'a}l, and C.~Szepesv{\'a}ri, ``Online least squares
  estimation with self-normalized processes: An application to bandit
  problems,'' \emph{arXiv preprint arXiv:1102.2670}, 2011.

\end{thebibliography}

\end{document}